\documentclass{easychair}

\usepackage{bm}
\usepackage{enumerate}

\usepackage{doc}

\usepackage{amsthm}
\newtheorem*{proposition*}{Proposition}

\def\I#1{\rlap{(}[#1\rlap{)}]}
\usepackage{algorithmic,multirow}

\usepackage[ruled,vlined,linesnumbered]{algorithm2e}

\title{Quantifier Elimination for Reasoning in Economics}
\titlerunning{Quantifier Elimination for Reasoning in Economics}

%
\author{
Casey B. Mulligan\inst{1}
\and
Russell Bradford \inst{2}
\and
James H. Davenport \inst{2}
\and\\
Matthew England \inst{3}
\and 
Zak Tonks \inst{2}
}

\institute{
  University of Chicago, USA\\
  \email{c-mulligan@uchicago.edu}
\and
   University of Bath, U.K.\\
   \email{\{R.Bradford, J.H.Davenport, Z.P.Tonks\}@bath.ac.uk}\\
\and
   Coventry University, U.K\\
   \email{Matthew.England@coventry.ac.uk}\\
 }


\authorrunning{Mulligan et al.}


\begin{document}

\maketitle

\begin{abstract}
We consider the use of Quantifier Elimination (QE) technology for automated reasoning in economics.  QE dates back to Tarski's work in the 1940s with software to perform it dating to the 1970s.  There is a great body of work considering its application in science and engineering but we show here how it can also find application in the social sciences.  We explain how many suggested theorems in economics could either be proven, or even have their hypotheses shown to be inconsistent, automatically; and describe the application of this in both economics education and research.  

We describe a bank of QE examples gathered from economics literature and note the structure of these are, on average, quite different to those occurring in the computer algebra literature.  This leads us to suggest a new incremental QE approach based on result memorization of commonly occurring generic QE results. 
\end{abstract}


\def\query{{\buildrel?\over\Rightarrow}}
{ 
\def\v{{\bf v}}

\section{Introduction}

A general task in economic reasoning is to determine whether, with variables $\v=(v_1,\ldots,v_n)$, the hypotheses $H$ follow from the assumptions $A$, i.e. is it the case that
$
\forall\v A \Rightarrow H
$?

Ideally the answer would be \verb+true+ or \verb+false+, but of course in practice life is more complicated.  There are in fact four possibilities:
\begin{center}
\begin{tabular}{r|cc}
& $\lnot\exists \v[A\land \lnot H]$ & $\exists \v[A\land \lnot H]$ \\ \hline
$\exists \v[A\land H]$     & True          & Mixed \\
$\lnot\exists \v[A\land H]$& Contradictory Assumptions & False
\end{tabular}
\end{center}
Should technology provide one of these automatically then in all cases an economist gains important information: either a proof or a disproof of the theory; an identification of where the theorem may be true (a description of $\{\v:A(\v)\Rightarrow H(\v)\}$); or the identification of contradictory assumptions.

Further than this, the economist could vary the question: the assumptions generating a \verb+true+ result can be weakened, or the assumptions generating a \verb+mixed+ result strengthened, by quantifying more or less of the variables in $\v$.  For example, we might partition $\v$ as $\v_1\cup\v_2$ and ask for $\{\v_1:\forall \v_2 A(\v_1,\v_2)\Rightarrow H(\v_1,\v_2)\}$. The result in these cases is a formula in the free variables that weakens or strengthens the assumptions as appropriate.  Should technology automatically generate these formulas, the economist gains information about how to reformulate assumptions that justify his hypotheses. 

\newpage

Such problems fall within the framework of \emph{Quantifier Elimination} (QE).  QE refers to the generation of an equivalent quantifier free formula from one that contains quantifiers.  QE is known to be possible over real closed fields thanks to the seminal work of Tarski \cite{Tarski1948}.  Practical implementations followed the work of Collins on the Cylindrical Algebraic Decomposition (CAD) method \cite{Collins1975} and Weispfenning on Virtual Substitution \cite{Weispfenning1988}.  There are modern implementations of QE in \textsc{Mathematica} \cite{Strzebonski2006}, \textsc{Redlog} \cite{DS97a}, \textsc{Maple} (\textsc{SyNRAC} \cite{IYA14} and the \textsc{RegularChains}  Library \cite{CM14c}) and \textsc{Qepcad-B} \cite{Brown2003b}.

QE has a long history within which it has found many applications within engineering and the life sciences.  Some recent examples include 
the derivation of optimal numerical schemes \cite{EH16},
artificial intelligence to pass university entrance exams \cite{WMTA16},
weight minimisation for truss design \cite{CC17},
and steady state analysis of biological networks \cite{BDEEGGHKRSW17, EEGRSW17}.  The recent survey article \cite{Sturm2017} has applications in geometric theorem proving, verification and the life sciences.

However, applications in the social sciences are lacking in the QE literature\footnote{The nearest we can find is \cite{LiWang2014a}, though there is also the thread of work in \cite{Caminatietal2015a}.}.  Indeed, on a few occasions when QE algorithms have been mentioned in economics they have been characterized as too "computationally demanding" \cite{Carvajal2014} and "something that is do-able in principle, but not by any computer that you and I are ever likely to see" \cite{Steinhorn2008}.  But these assertions are based on interpretations of theoretical computer science results rather than experience with actual software applied to an actual economic reasoning problem.  This paper examines real examples that introduce economics as a new, potentially large, application area for QE.

\vspace*{0.1in}

The paper proceeds as follows.  In Section \ref{SEC:Examples} we describe in detail some examples from economics, ranging from textbook examples common in education to questions arising from current research discussions.  We explain how they may be resolved via QE and then in Section \ref{SEC:ExampleCollection} we describe a collection of similar examples and analyse the structure in comparison to those from the QE literature, concluding that they are not well represented.   In Section \ref{SEC:BlockQE} we note that while existing software can already solve the examples QE methods could be adapted to better deal with them by incrementally eliminating quantifiers using memorised generic QE results as building blocks.  In Section \ref{SEC:BuildBlock} we examine how some such blocks can be generated with minimal applications of traditional QE techniques.


\section{Economic reasoning as a case of Tarski's elementary algebra}
\label{SEC:Examples}

The fields of economics ranging from macroeconomics to industrial organization to labor economics to econometrics involve deducing conclusions from assumptions or observations.  Will a corporate tax cut cause workers to get paid more?  Will a wage regulation reduce income inequality?  Under what conditions will political candidates cater to the median voter?

\subsection{Comparative static analysis}

We start with Alfred Marshall's \cite{Marshall1895} classic, and admittedly simple, analysis of the consequences of cost-reducing progress for activity in an industry.  Marshall  concluded that, for any supply-demand equilibrium in which the two curves have their usual slopes, a downward supply shift increases the equilibrium quantity $q$ and decreases the equilibrium price $p$.  

One way to express his reasoning in Tarski's framework is to look at the industry's local comparative statics: meaning a comparison of two different equilibrium states between supply and demand.  
With a downward supply shift represented as $da > 0$ we have here:
\begin{align*}
&A \equiv D'(q)<0 \wedge S'(q)>0 \\
&\qquad \wedge \dfrac{d}{da}\big(S(q)-a\big)=\frac{dp}{da}
\wedge \frac{dp}{da}=\frac{d}{da}D(q)
\\
&H \equiv \frac{dq}{da}>0 \wedge \frac{dp}{da}<0
\end{align*}
where: 
\begin{itemize}
\item $D'(q)$ is the slope of the inverse demand curve in the neighborhood of the industry equilibrium;
\item $S'(q)$ is the slope of the inverse supply curve; 
\item $\dfrac{dq}{da}$ is the quantity impact of the cost reduction; and 
\item $\dfrac{dp}{da}$ is the price impact of the cost reduction.
\end{itemize}
Economically, the first atoms of $A$ are the usual slope conditions: that demand slopes down and supply slopes up.  The last two atoms of $A$ say that the cost change moves the industry from one equilibrium to another.  Marshall's hypothesis was that the result is a greater quantity traded at a lesser price.

These``variables'' $\v = \left\{D'(q),S'(q),\dfrac{dq}{da},\dfrac{dp}{da}\right\}$ are four real numbers $(v_1,v_2,v_3,v_4)$, and so, after applying the chain rule, $A$ and $H$ may be understood as Boolean combinations of polynomial equalities and inequalities:
\begin{align*}
&A \equiv v_1<0 \wedge v_2>0  \wedge v_3v_2-1=v_4 \wedge v_4=v_3v_1 
\\
&H \equiv v_3>0 \wedge v_4<0
\end{align*}
 Thus Marshall's reasoning fits in the Tarski framework and therefore is amenable to QE tools, and any of the modern QE implementations mentioned in the introduction can conclude instantly that $\forall\v A \Rightarrow H$ is \texttt{true} and thus confirm Marshall's conclusion.

\subsection{Scenario analysis}
\label{SUBSEC:Scenario}

Economics is replete with \emph{"what if?"} questions.  Such questions are logically and algebraically more complicated, and thereby susceptible to human error, because they require tracking various scenarios.  
Writing at nytimes.com\footnote{http://krugman.blogs.nytimes.com/2012/11/03/soup-kitchens-caused-the-great-depression/
}, Economics Nobel laureate Paul Krugman asserted that whenever taxes on labor supply are primarily responsible for a recession, then wages increase.  
Two scenarios are discussed here: what actually happens ($act$) when taxes ($t$) and demand forces ($a$) together create a recession, and what would have happened ($hyp$) if taxes on labor supply had been the only factor affecting the labor market.  Expressed logically we have:
\begin{align*}
A &\equiv \bigg( \, \frac{\partial D(w,a)}{\partial w}<0 \wedge \frac{\partial S(w,t)}{\partial w}>0  
\\
&\qquad \wedge
	\frac{\partial D(w,a)}{\partial a}=1 \wedge
    \frac{\partial S(w,t)}{\partial t}=1 
\\
&\qquad \wedge
    \frac{d}{dact}\big(D(w,a)=q=S(w,t)\big)
\\
&\qquad  \wedge   
    \frac{d}{dhyp}\big(D(w,a)=q=S(w,t)\big)
\\
&\qquad \wedge
	\frac{d t}{dact}=\frac{d t}{dhyp}\wedge
	\frac{d a}{dhyp}=0
\\
&\qquad \wedge
	\frac{dq}{dhyp}<\frac{1}{2} \frac{dq}{dact}<0 \, \bigg)
\\
H &\equiv \frac{dw}{dact}>0
\end{align*}
In Economics terms, the first line of assumptions contains the usual slope restrictions on the supply and demand curves.  Because nothing is asserted about the units of $a$ or $t$, the next line just contains normalizations.  The third and fourth lines say that each scenario involves moving the labor market from one equilibrium (at the beginning of a recession) to another (at the end of a recession).  The fifth line defines the scenarios: both have the same tax change but only the $act$ scenario has a demand shift.  The final assumption / assertion is that a majority of the reduction in the quantity $q$ of labor was due to supply (that is, most of $\frac{dq}{dact}$ would have occurred without any shift in demand).  The hypothesis is that wages $w$ are higher at the end of the recession than they were at the beginning.

\noindent Viewed as a Tarski formula this has twelve variables 
\begin{align*}
\v = \bigg\{
&\frac{da}{dact},\frac{da}{dhyp},\frac{dt}{dact},\frac{dt}{dhyp}, 
\\
&\frac{dq}{dact},\frac{dq}{dhyp},\frac{dw}{dact},\frac{dw}{dhyp}, 
\\
&\frac{\partial D(w,a)}{\partial a},\frac{\partial S(w,t)}{\partial t},\frac{\partial D(w,a)}{\partial w},\frac{\partial S(w,t)}{\partial w}\bigg\},
\end{align*}
each of which is a real number representing a partial derivative describing the supply and demand function or a total derivative indicating a change over time within a scenario.   QE shows that the result is \verb+mixed+: even when all of the assumptions are satisfied, it is possible that wages actually go down.  Moreover, if $\frac{\partial D(w,a)}{\partial w}$ and $\frac{\partial S(w,t)}{\partial w}$ are left as free variables, the resulting formula shows that one also needs to assume that labor supply is at least as sensitive to wages as labor demand is.\footnote{The quantifier-free formula is $\frac{\partial S(w,t)}{\partial w}\geq-\frac{\partial D(w,a)}{\partial w}$>0 (recall that the demand slope is a negative number).  See also \cite{Mulligan2012}.}

\subsection{Vector summaries}

Economics problems sometimes involve an unspecified (and presumably large) number of variables.  Take Sir John Hicks' analysis of household decisions among a large number $N$ of goods and services \cite{Hicks1946}.  The quantities purchased are represented as a $N$-dimensional vector, as are the prices paid.  When prices are $\bf p$ ($\hat{\bf p}$), the household makes purchases $\bf q$ ($\hat{\bf q}$), respectively:\footnote{The price change is compensated in the Hicksian sense, which means that $\bf q$ and $\hat{\bf q}$ are the same in terms of results or "utility", so that the consumer compares them only on the basis of what they cost (dot product with prices).  }
\begin{equation*}
A \equiv (\bf p \cdot \bf q \leq \bf p \cdot \hat{\bf q}) 
\, \wedge \,  
(\hat{\bf p} \cdot \hat{\bf q} \leq \hat{\bf p} \cdot \bf q).
\end{equation*}
Hicks asserted that the quantity impact of the price changes $ \hat{\bf q} - \bf q$ cannot be positively correlated with the price changes $ \hat{\bf p} - \bf p$:
\begin{equation*}
	H \equiv (\hat{\bf q} - \bf q) \cdot (\hat{\bf p} - \bf p) \leq 0
\end{equation*}
Although the length of the vectors is unspecified, Hicks' reasoning depends only on the vector dot products, four of which appear above.  Hicks implicitly assumed that prices and quantities are real valued, which places additional restrictions on the dot products.  These restrictions are not needed in this instance, but if they were we could add to the list of variables to reflect all ten products that are possible with four vectors and restrict their Gram matrix to be positive semidefinite.   
QE technology reveals that no counterexample exists: that Sir John Hicks was correct.


\section{The algebraic structure of economic reasoning problems}
\label{SEC:ExampleCollection}

We have assembled a new benchmark set of 45 economic theorems, chosen for their importance in economic reasoning and their consistency with the Tarski framework, but not on the basis of their complexity.  The set is freely available from the following URL.\\
\texttt{\url{https://doi.org/10.5281/zenodo.1226892}}.

The sentences tend to be significantly more complicated than the examples described above: on average, each contain 17 variables and 19 polynomials.  Examples with this number of variables rarely appear in the QE literature, especially because algorithms like CAD have complexity doubly exponential in the number of variables.\footnote{QE itself is known to be doubly exponential in the number of quantifier changes.}

However, there are other features of these examples which give hope for their resolution with QE technology.  

\subsection{Low degree in individual variables}

The maximum total degree of the polynomials in each sentence ranges but averages more than four.  However, the maximum degree in any one variable is usually two and not more than three.  It is the degree of an individual variable rather than the total degree that is of most importance, since this controls the maximum number of real roots to be isolated in any decomposition.\footnote{It is the maximum individual degree within the base for the double exponent in the complexity bound of CAD.}

\subsection{Plentiful useful side  conditions}

With variables frequently multiplying each other in a sentence's polynomials, a potentially large number of polynomial singularities might be relevant to a QE procedure.  However, the 45 sentences typically include sign conditions that in principle rule out the need to consider many such singularities, if the computational path of the algorithm can be designed to exploit this.

\subsection{Sparse occurrence of variables}

For each sentence we have formed a matrix of occurrences, with rows representing variables and columns representing polynomials.  A matrix entry is one if and only if the variable appears in the polynomial, and zero otherwise.  The average entry of all 45 occurrence matrices is 0.15.  The sparsity of the occurrence matrix indicates that we can limit the number of resultants that need to be calculated when applying projection operators.

\subsection{Tackling with existing QE technology}

Although the occurrence matrices are sparse, most of them are not sparse enough to make CAD projection operators, let alone construction of a full CAD, practical.  Take the scenario-analysis example from Section \ref{SUBSEC:Scenario} above, which is one of the simpler of the 45. It has sixteen polynomials, but application of Brown's projection operator \cite{Brown2001a} in alphabetical order of the variables encounters a lack of memory after eliminating just 5 of the 12 variables (prior to any real root isolation or stack construction).  Even when using a better variable-elimination order, we may conclude that Brown's projection operator produces thousands of polynomials.  

Although construction of a full CAD is not a practical method for deciding the 45 sentences, automatic QE is still possible in seconds: \textsc{Mathematica}'s \verb+Resolve+ function and \textsc{Redlog}'s \verb+rlqe+ function both decide most of them, presumably using methods other than CAD such as Virtual Substitution \cite{Weispfenning1988}.  
Many of the problems are fully quantified with existential quantifiers, making them SAT instances and in the scope of SMT\footnote{Satisfiability Modulo Theory (see for example \cite{AAB+16a})}-solvers which handle non linear real arithmetic, such as \textsc{Z3}, which decides all but two such problems.  We note that all implementations benefit from a good choice of variable ordering and that \textsc{Z3} has an additional sensitivity to the order in which constraints are presented in the logical formula.  Each implementation\footnote{We use version 11.2 of \textsc{Mathematica}, revision 4330 of \textsc{Redlog PSL}, and version 4.5.0 of \textsc{Z3}, all running on OSX.}  was advantageous for certain groups of examples. 

We note here that the polynomial list presented in Section 5.2 of \cite{BD07}, which has a projection set that is doubly exponential in the number of variables, is sparse like these economics examples and even simpler in terms of being linear in both their individual and total degree.  So like the economics examples, some Tarski formulas formed from the \cite{BD07} polynomials should be amenable to QE technology not relying on full CADs.

\section{QE incrementally by quantified variable}
\label{SEC:BlockQE}

Although existing technology can make progress with economics examples the unique structure suggests that a specialised QE technique may offer improved performance here.

In recent years there has been rapid development in new QE techniques, particularly adaptations of the CAD method, which offers computations more tailored to the particular application domain or logical structure. For example:
\begin{itemize}
\item making use of any Boolean structure in the input \cite{BDEMW16, EBD15}; 
\item local projection approaches \cite{Brown2013, Strzebonski2016}; 
\item non-uniform CADs (which relax the global cylindricity condition) \cite{Brown2015}; and their interaction with (Satisfiability Modulo Theory) SMT-solvers \cite{JdM12, AAB+16a}.
\item triangular decompositions via complex space \cite{CMXY09, BCDEMW14};
\end{itemize}
What we need for the examples in this collection is a framework which exploits the low degrees that individual variables occur in.  Also, the sparsity of the occurrence matrices described in the previous section suggests an incremental elimination approach.  

In particular, when removing a single existential quantifier, only a handful of the atoms in the formula contain the variable being eliminated, and often occurring in formulae with a familiar structure (meaning number, degree and type of constraints).  This suggests the economy of generating a small number of known QE results for formulae of generic structure which can be used repeatedly.  This allows for quantifiers to be removed incrementally without forming a full CAD or reflecting all of the variables in the formula.  We will demonstrate this by means of a working example.

\subsection{A working example}

We use an example adapted from the graduate-level microeconomics textbook \cite{JehleReny2011}.  The example asserts that any differentiable, quasi-concave, homogeneous, three-input production function with positive inputs and marginal products must also be a concave function.  In the Tarski framework, we have a twelve-variable sentence, which we have expressed simply with variable names $v_1, \dots, v_{12}$ to compress the presentation:

\begin{align*}
A &\equiv v_{1} v_{10}+v_{2} v_{7}+v_{3} v_{5}=0 \land v_{1} v_{11}+v_{2} v_{8}+v_{3} v_{7}=0
\\
&\quad 
\land v_{1} v_{12}+v_{10} v_{3}+v_{11} v_{2}=0 
\\
&\quad 
\land v_{1}>0 \land v_{2}>0 \land v_{3}>0 
\land v_{4}>0 \land v_{6}>0 \land	v_{9}>0 
\\
&\quad 
\land 2 v_{11} v_{6} v_{9}>v_{12} v_{6}^2+v_{8} v_{9}^2 
\\
&\quad 
\land 2 v_{10} v_{6} (v_{11} v_{4}+v_{7} v_{9})+v_{9} (2 v_{11} v_{4} v_{7}-2 v_{11} v_{5} v_{6}+v_{5} v_{8} v_{9})
\\
&\qquad 
	+ v_{12} \big(v_{4}^2 v_{8}-2 v_{4} v_{6} v_{7}+v_{5} v_{6}^2\big)>
\\
&\qquad \qquad 
	v_{10}^2 v_{6}^2+2 v_{10} v_{4} v_{8} v_{9}+v_{11}^2 v_{4}^2+v_{7}^2 v_{9}^2.
\\
\quad
\\
H &\equiv 
v_{12}\leq 0\land v_{5} \leq 0\land v_{8}\leq 0 
\\
&\qquad \land
	v_{12} v_{5}\geq v_{10}^2 \land
	v_{12} v_{8}\geq v_{11}^2\land
	v_{8} v_{5}\geq v_{7}^2
\\
&\qquad \land 
	v_{8} \left(v_{10}^2-v_{12} v_{5} \right)+v_{11}^2 v_{5}+v_{12} v_{7}^2\geq 2 v_{10} v_{11} v_{7} .
\end{align*}

In disjunctive normal form, $A \wedge \neg H $ is a disjunction of seven clauses.  Each atom of $H$, and therefore each clause of $A \wedge \neg H $ in DNF, corresponds to a principal minor of the production function's Hessian matrix.  So the clauses in the DNF are $A \land v_{12}>0, A \land v_5 >0 \land \dots$ in the same sequence used in $H$ above.  The existential quantifiers applied to $A \wedge \neg H $ can be distributed across the clauses to form seven smaller QE problems.

As of early 2018, \textsc{Z3} could not determine whether the Tarski formula $A \wedge \neg H $ is satisfiable: it was left running for several days without producing a result or error message.

Even if we take only the first clause of the disjunction (the one containing $v_{12} > 0$), the Tarski formula has twelve polynomials in twelve variables, and it is not practical to construct a full CAD for them.  For example, just three applications of Brown's projection operator to eliminate $\{v_{12} , v_{11}, v_{10}\}$, in that order, results in 200 unique polynomials with nine variables still remaining to eliminate.

\subsection{Two building blocks}

Our approach is to make use of generic QE results as building blocks which can be repeatedly used to solve a larger specific QE problems.  We need two such building blocks to resolve the working example.

We denote Block-A as the QE problem to eliminate the quantifier on the sole  common variable of the conjunction of linear equation, a linear inequality and a quadratic inequality:
\begin{align*}
\mbox{Block-A} &= \exists_x \big( a_{1,1}x +a_{1,0}=0 \land a_{2,1}x+a_{2,0}>0 
\\
&\qquad \qquad \land a_{3,2}x^2 + a_{3,1}x+a_{3,0}>0 \big).
\end{align*}
Here the coefficient subscripts indicate their position in the formula, so $a_{i,j}$ is the coefficient of $x^j$ in constraint $i$.  From now on we suppress the comma in the subscripts to save space.

We can use a QE implementation to find the equivalent quantifier free formulae for Block-A:
\begin{align*}
&\quad \left(a_{11}>0\land a_{32} a_ {10}^2+a_ {11}^2 a_{30}>a_{10} \
	a_{11} a_ {31}\land a_{10} a_{21}<a_{11} a_{20}\right) 
\\
&\lor
	\left(a_{10}
	a_{21}>a_{11} a_ {20}\land a_{32} a_ {10}^2+a_ {11}^2 \
	a_{30}>a_{10} a_{11} a_ {31}\land a_{11}<0\right)
\\
&\lor 
	\left(a_{10}=0\land a_{11}=0\land
	a_{20}>0\land a_{32}>0\right)
\\
&\lor
	\left(a_{10}=0\land \
	a_{11}=0\land a_{32}>0\land a_{21}\neq 0\right)
\\
&\lor
	\left(a_{10}=0\land a_{11}=0\land
	a_{32} a_ {20}^2+a_ {21}^2 a_{30}>a_{20} a_{21} a_ {31}\land \
	a_{21}\neq 0\right)
\\
&\lor
	\big(a_{10}=0\land a_{11}=0\land \
	a_{32} a_ {20}^2+a_ {21}^2
	a_{30}=a_{20} a_{21} a_{31}
\\
&\qquad \land
	2 a_{20} a_{32}<a_{21} a_{31}\land a_{21}\neq 0\big)
\\
&\lor 
	\left(a_{10}=0\land \
	a_{11}=0\land a_{21}>0\land
	a_{31}>0\land a_ {32}\geq 0\right)
\\
&\lor \left(a_{10}=0\land \
	a_{11}=0\land 2 a_{20} a_ {32}^2>a_{21} a_{31} a_ {32} \right.
\\
&\qquad\land \left. 4 \
	a_{30} a_ {32}^2>a_ {31}^2
	a_ {32}\land a_{32}\neq 0\right)
\\
&\lor \left(a_{10}=0\land \
	a_{11}=0\land a_ {32}\geq 0\land a_{21}<0\land a_{31}<0\right) 
\\
&\lor
	\left(a_{10}=0\land
	a_{11}=0\land a_{21}=0\land a_{20}>0\land a_{30}>0\land a_{32}\geq 0\right)
\\
&\lor 
	\left(a_{10}=0\land a_{11}=0\land a_{21}=0\
	\land a_{20}>0\land
	a_ {32}\geq 0\land a_{31}\neq 0\right)
\\
&\lor 
	\left(a_{10}=0\land \
	a_{11}=0\land a_{21}>0\land a_{30}>0\land a_ {31}\geq 0\land a_{32}\geq 0\right)
\\
&\lor
	\left(a_{10}=0\land a_{11}=0\land a_{30}>0\land a_ {32}\geq \
	0\land a_{21}<0\land a_ {31}\leq 0\right)
\end{align*}
While this looks lengthy, all but two clauses are of the form $$\left(a_{10}=0\land a_{11}=0\land\cdots\right),$$ and these clauses often simplify for specific values of the generic coefficients. The equality constraints in them also enable better processing \cite{BDEMW16}.

Despite the presence of the symbolic coefficients Block-A's full CAD is actually fairly simple to identify because $x$ is the only variable appearing in more than one polynomial.  Using Brown's projection operator (eliminating $x$ first and then the coefficients in alphabetical order), the entire projection set has only 11 polynomials!  We may substitute for the $a_{ij}$ to solve problems with specific polynomials without further use of QE technology.  

The second building block required to resolve the working example is a similar problem with an additional linear inequality present, which we denote Block-B:
\begin{align*}
\mbox{Block-B} &= \exists_x \big( a_{11}x+a_{10}=0 \land a_{21}x+a_{20}>0 
\\
&\qquad \qquad \land \, a_{31}x +a_{30}>0 
\land a_{42}x^2 +a_{41}x +a_{40}>0 \big).
\end{align*}
The quantifier free formula equivalent to Block-B is significantly larger and so omitted here (but can still be calculated instantly in \textsc{Redlog}).

\subsection{Building blocks applied repeatedly to eliminate quantifiers}

Take the first clause of $A \wedge \neg H$ from our working example in DNF, which is the one containing $v_{12} > 0$.  It is a conjunction of twelve atoms, only four of which contain $v_{12}$.  The four atoms are shown below, with $v_{12}$ highlighted in red.
\begin{align*}
&\quad \left(v_1 \textcolor{red}{v_{12}}+v_{10} v_3+v_{11} v_2=0\right) 
\\
&\land
	\left(2 v_{11} v_6 v_9>\textcolor{red}{v_{12}} v_6^2+v_8 v_9^2 \right) \land 
	\left(\textcolor{red}{v_{12}} >0\right) 
\\
&\land
	\big(2 v_{10} v_{11} v_4 v_6+2 v_{10} v_6 v_7 v9+2 v_{11} v_4 v_7 v_9
	- 2 v_{11} v5 v6 v9 
\\
&\qquad \quad
	+\textcolor{red}{v_{12}} v_4^2 v_8-2 \textcolor{red}{v_{12}} v_4 v_6 v_7+\textcolor{red}{v_{12}} v_5 v_6^2+v_5 v_8 v_9^2
\\
&\qquad 
	 > v_{10}^2 v_6^2+2 v_{10} v_4 v_8 v_9+v_{11}^2 v_4^2+v_7^2 v_9^2 \big)
\end{align*}

Notice that the conjunction of the four atoms is of the same format as Block-B, for example with $a_{11}$ interpreted as $v_1$.  We can therefore eliminate $v_{12}$ from this clause as follows: 
\begin{enumerate}[(i)] 
\item Take the quantifier free version of Block-B and substitute the generic coefficients for the polynomials in $\{v_{1}, \dots, v_{11}\}$ required for it to match the conjunction above.
\item Conjunct to this the remaining eight $v_{12}$-free atoms from the original clause.  
\end{enumerate}
The output will need to be simplified (we have used \textsc{Mathematica}).  The result of (i) is a DNF in which many clauses can be discarded.  For example, we had $a_{31}=1$ and $a_{30}=0$ and so many of the clauses in the quantifier free form of Block-B which require incompatible sign conditions on $a_{31}$ and $a_{30}$ must be \texttt{false} and so can be discarded directly from that disjunction, leaving in this case a 16 clause DNF.  

Further simplification can then be applied after conjunction with the constraints added by (ii).  In this example one of those constraints was $v_1>0$ which is in incompatible with the $v_1=0$ appearing in 15 of the 16 clauses.  Thus here the output simplifies to a single conjunction, but more generally we would need to form a DNF and continue distributing quantifiers over clauses and solving sub-problems.

For our example, after simplification the resulting formula is a conjunction of eleven atoms, four of which contain $v_{11}$:
\begin{align*}
&\quad \left(v_{1} \textcolor{red}{v_{11}}+v_{2} v_{8}+v_{3} v_{7}=0\right) \land 
	\left(v_{10} v_{3}+\textcolor{red}{v_{11}} v_{2}<0\right) 
\\
&\land
	\left(2 v_{1} \textcolor{red}{v_{11}} v_{9}+v_{10} v_{3} v_{6}+\textcolor{red}{v_{11}} v_{2} v_{6})>v_{1} v_{8} v_{9}^2 \right) 
\\
&\land 
	\bigg(v_{1} \Big(v_{1} \big(v_{10}^2 v_{6}^2-2
	\textcolor{red}{v_{11}} (v_{10} v_{4} v_{6}+v_{4} v_{7} v_{9}-v_{5} v_{6} v_{9})
\\ &\qquad
	+v_{10} (2 v_{4} v_{8} v_{9}-2
	v_{6} v_{7} v_{9})+\textcolor{red}{v_{11}}^2 v_{4}^2+v_{9}^2 (v_{7}^2-v_{5} v_{8})\big)
\\ &\qquad
	+(v_{10} v_{3}+\textcolor{red}{v_{11}}
	v_{2}) (v_{4} (v_{4} v_{8}-2 v_{6} v_{7})+v_{5} v_{6}^2 )\Big)<0\bigg)
\end{align*}
We can once again apply Block-B; substitute for the generic coefficients; conjunct to this the remaining 7 atoms; and simplify.  We repeat this two further more times to eliminate $v_{10}$ and $v_{8}$.  In each of these three eliminations the simplification leaves only one clause of the DNF.  

The result at this stage is a conjunction of nine atoms, three of which contain $v_{2}$.  One of the atoms is an equality that is linear in $v_{2}$, another is a strictly linear inequality, and the third is a strict quadratic inequality.  We hence this time apply Block-A to eliminate $v_{2}$.  Upon simplification the resulting quantifier free formula is shown to be \texttt{false} (all clauses of the resulting DNF had incompatible constraints).  

We are not yet finished as we have only studied the first of the 7 clauses in the DNF of $A \wedge \neg H$.  So far we have shown there is no counterexample satisfying $v_{12}$ > 0.  The second clause of the original disjunction (the one containing $v5 > 0$) can be decided by applying Block-A four times, using the same variable-elimination order as above, to arrive at a contradiction in a similar manner.

The third and sixth disjunction clauses, respectively, are decided by applying the building blocks (with the same variable order) as follows:
\begin{itemize}
\item Block-A, Block-A, Block-A, Block-B;
\item Block-A, Block-A, Block-A.
\end{itemize}
The final three disjunction clauses are decided by eliminating $v_{12}$, followed by $v_{8}$, and (for the two in which is is necessary) then $v_{7}$.  The building-block applications are:
\begin{itemize}
\item Block-B, Block-A, Block-A;
\item Block-B, Block-B, Block-B;
\item Block-B, Block-B.
\end{itemize}
As with the first two clauses, all five of these clauses result in a contradiction.  $A \wedge \neg H $, which is a disjunction of the seven clauses, is therefore \texttt{false}: any differentiable, quasi-concave, homogeneous, three-input production function with positive inputs and marginal products must also be a concave function.  

In summary, for this example, seven twelve-variable existential sentences are decided with repeated application of two simple prototype QE problems.

\subsection{General algorithm}

We present as Algorithm \ref{alg:ByBlocks} overleaf, the general high level approach we propose, as described above for the working example.

\begin{algorithm}[t!]
\caption{QE Incrementally by variable using blocks}
\label{alg:ByBlocks}
\DontPrintSemicolon
\SetKwInOut{Input}{Input}\SetKwInOut{Output}{Output}
\Input{A formula quantified in variables $v_1,\ldots,v_n$.}
\Output{Either \texttt{true}, \texttt{false}, or an equivalent quantifier free formula in any free variables.}
\BlankLine
Arrange the formula so that all quantifiers are existential over a disjunctive normal form. \;
Let $\mathcal{C}$ be the set of clauses in the input formula.\;
Prepend to each $c \in \mathcal{C}$ the quantifiers (and corresponding quantified variables) of the formula. \;
\While{there is a quantified $c \in \mathcal{C}$:}{
Choose a quantified variable $v$.\;
Set $c_1$ to be the conjunction of atoms in $c$ which contain $v$ and $c_2$ to be the conjunction of the remaining.\;  
View $c_1$ as univariate in one of the quantified variables and observe the structure (degree, number and type of constraints).\;
Select the corresponding generic QE result from stored results (or if not stored then generate).\;
Substitute generic coefficients for the corresponding polynomial coefficients of $v$ from $c_1$ to form a DNF $\phi$.\;
\For{each clause $d$ in $\phi$}{
Form $e = d \land c_2$.\;
\If{$e$ simplifies to \texttt{false}}{
\texttt{continue} to next entry in $\phi$.\;
}
\If{$e$ simplifies to \texttt{true}}{
\Return \texttt{true}.\;
}
Otherwise prepend the remaining quantified variables from $c$ to $e$ and add to $\mathcal{C}$.
}
}
\Return The disjunction of unquantified constraints in $\mathcal{C}$.
\end{algorithm}

The correctness of the algorithm is not hard to see.  We recall that existential quantifiers distribute over disjunctions so the disjunction of solutions to the sub-problems formed in $\mathcal{C}$ by Step 4 is a solution to the input problem.  

The efficiency of the algorithm is dependent on Step 8 requiring only a small number of previously generated QE results, all univariate with low degree; and the simplification of $e$ leading to a Boolean in most cases of the inner loop.  These conditions seem to often be the case for examples from economics.

We note that Algorithm \ref{alg:ByBlocks} covers the case where the evaluation of a clause could involve a disjunction, not present for our worked example.  The order of discussion of the worked examples suggests to eliminate all variables from one parent clause before addressing the next, but as Algorithm \ref{alg:ByBlocks} demonstrates this is not necessary.  Exhausting one clause would be a good strategy if the clause evaluated to \texttt{true} as computation could cease early but otherwise offers no particular benefit.

\section{Recursive generation of QE output for additional low degree formulae}
\label{SEC:BuildBlock}

The resolution of a large problem by only a few generic QE results as in the last examples appears to be common within the examples generated by economics theory.  As described above, none of our example problems involve variables which appear with degree greater than three.  However, the number of (low degree) polynomial constraints involved can be far greater than in the preceding example.  Thus we need to generate further generic QE results other than Block-A and Block-B.  In this section we explore how one might derive these with a minimal use of QE technology.

\subsection{Result for combining linear inequalities}

In this section let $F(x, M)$ be a conjunction of $M$ inequalities each linear in the variable $x$ with coefficients polynomial in the variables $\bm{z} = (z_1, \dots, z_n)$. So
\begin{align*}
F(x,M) &= \bigwedge_{i=1}^{M} P_i \theta_i 0 \\
&= P_1(x) \theta_1 0 \land P_2(x) \theta_2 0 \land \dots \land P_M(x) \theta_M 0
\end{align*}
where each $P_i(x) = a_i(\bm{z})x + b_i(\bm{z})$ and each $\theta_i \in \{>, \geq\}$.

\begin{proposition*}
With the notation above we have the following logical equivalence:
\begin{equation*}
\exists_x F(x,M + 1) = \underbrace{\exists_xF(x,M)}_{C_0} \land
\bigwedge_{i=1}^{M}  
\underbrace{\exists_x P_{M+1}(x) \theta_{M+1} 0 \land P_i(x) \theta_i 0}_{C_i}.
\end{equation*}
\end{proposition*}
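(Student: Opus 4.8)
The plan is to prove the equivalence \emph{pointwise in} $\bm{z}$: since both sides are quantified only in $x$ with every other symbol free, it suffices to fix an arbitrary real assignment to $\bm{z}$ and check that the two sides take the same truth value there. Under such an assignment each coefficient $a_i(\bm{z}),b_i(\bm{z})$ is a concrete real number, so each atom $P_i(x)\,\theta_i\,0$ defines a subset $J_i\subseteq\mathbf{R}$ of the line in the single variable $x$. When $a_i(\bm{z})\neq 0$ this $J_i$ is a half-line — a lower bound $\{x : x > c\}$ or $\{x \geq c\}$ when $a_i(\bm{z})>0$, and an upper bound when $a_i(\bm{z})<0$ — while when $a_i(\bm{z})=0$ it degenerates to all of $\mathbf{R}$ or to $\emptyset$ according to whether $b_i(\bm{z})\,\theta_i\,0$ holds. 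In every case $J_i$ is a (possibly empty) interval, and the key dictionary is that $\exists_x F(x,M)$ asserts exactly $\bigcap_{i=1}^{M} J_i\neq\emptyset$, while $\exists_x[P_{M+1}\,\theta_{M+1}\,0\land P_i\,\theta_i\,0]$ asserts $J_{M+1}\cap J_i\neq\emptyset$.

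With this dictionary the forward implication is immediate and uses nothing about linearity: a single witness $x^*$ of $F(x,M+1)$ lies in $\bigcap_{i=1}^{M+1} J_i$, hence in $\bigcap_{i=1}^{M} J_i$ and in each $J_{M+1}\cap J_i$, so the right-hand side holds. The whole content is in the converse, where the \emph{separate} existentials on the right, each allowed its own value of $x$, must be upgraded to a single common witness.

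For the converse I would invoke Helly's theorem in dimension one: a finite family of intervals in $\mathbf{R}$ has a common point as soon as every two of its members meet. Assuming the right-hand side, I first note that $\bigcap_{i=1}^{M} J_i\neq\emptyset$ forces every pair within $\{1,\dots,M\}$ to meet (just restrict the common point), and the conjunction $\bigwedge_{i=1}^{M}(J_{M+1}\cap J_i\neq\emptyset)$ supplies the remaining pairs involving index $M+1$; together these are all pairwise intersections of $J_1,\dots,J_{M+1}$. Helly then yields $\bigcap_{i=1}^{M+1} J_i\neq\emptyset$, i.e.\ $\exists_x F(x,M+1)$. One needs $M\geq 1$ here, since for $M=0$ the right-hand side is vacuously \texttt{true} while $\exists_x F(x,1)$ can be \texttt{false}; this matches the intended incremental use.

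The main obstacle is the careful handling of the boundary, namely the strict-versus-non-strict distinction ($\theta_i\in\{>,\geq\}$) together with the degenerate cases $a_i(\bm{z})=0$. These are exactly the subtleties that Helly's theorem for intervals already absorbs, so the cleanest route is to cite it. If instead one wants a self-contained argument, I would run the direct sup/inf proof — set $I=\bigcap_{i=1}^{M} J_i$, observe that $I$ is a nonempty interval, and show $I\cap J_{M+1}\neq\emptyset$ by comparing the endpoint of $J_{M+1}$ against whichever constraint $J_{i_0}$ (with $i_0\leq M$) realises the opposite endpoint of $I$, using $J_{M+1}\cap J_{i_0}\neq\emptyset$ to guarantee the two do not separate. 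The only real care needed there is to track whether endpoints are attained, and to dispose first of the trivial cases where some $J_i$ is $\emptyset$ (then both sides are \texttt{false}) or $\mathbf{R}$ (then that constraint may be dropped). Convexity of the $J_i$ is what makes all of this work, and the analogous statement need not hold once the atoms are genuinely quadratic in $x$, which is why Block-A and Block-B above cannot be reduced to pairwise tests in this way.
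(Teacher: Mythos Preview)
Your proof is correct and rests on the same idea as the paper's: fix $\bm{z}$, translate each linear atom into a half-line $J_i\subseteq\mathbf{R}$, and argue that pairwise intersections force a common point. The paper carries this out by hand---proving the case $M=2$ explicitly by writing out the interval cut out by $C_0$ and checking it against the half-line for $P_3$ via a case split on direction, then saying ``similarly'' for general $M$---whereas you name the underlying principle (Helly in dimension one) and apply it directly to all $M$ at once. This is a tidy repackaging rather than a genuinely different route: the Helly citation absorbs exactly the open/closed endpoint bookkeeping that the paper tracks case by case, and the alternative $\sup/\inf$ sketch you give is essentially the paper's own argument. Your observation that $M\geq 1$ is required (since for $M=0$ the right-hand side is vacuously true while the left can fail when $a_1(\bm{z})=0$ and $b_1(\bm{z})\,\theta_1\,0$ is false) is a small caveat the paper leaves implicit.
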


\begin{proof}
Clearly if the left hand side is true then the right must be also, since satisfying them all together would imply satisfying all but one ($C_0$) as well as any pair (the other $C_i$).  

The opposite direction is less obvious.  If any $C_i$ is false then the right overall becomes false; as would the left since it requires a superset of the unsatisfiable condition.  So from now on we assume the truth of each $C_i$, and we aim to show that the $x$'s asserted to exist by each $C_i$ could be taken equal.

We will assume that $\forall i, a_i(\bm{z}) \ne 0$. If this where not the case for any $a_i(\bm{z})$ then $P_i(x)\theta_i 0$ becomes $b_i(\bm{z}) \theta_i 0$, which for fixed $\bm{z}$ is either true immediately (effectively reducing $M$ by one) or false immediately (ruled out above).  

For simplicity let us prove first the case $M=2$, i.e.
\begin{eqnarray*}
&\exists_x P_1(x) \theta_1 0 \land P_2(x) \theta_2 0 \land P_3(x) \theta_3 0 =\underbrace{\exists_x P_1(x)\theta_10\land P_2(x)\theta_20}_{C_0}\cr&
\underbrace{\exists_xP_3(x) \theta_3 0\land P_1(x)\theta_10}_{C_1}
\quad \land \quad \underbrace{\exists_x P_3(x) \theta_3 0\land P_2(x)\theta_20}_{C_2}.
\end{eqnarray*}
Then each
$P_i(x)\theta_i0$ may be written as 
$x \phi_i(\bm{z})c_i(\bm{z})$
where $c_i(\bm{z})=-b_i(\bm{z})/a_i(\bm{z})$ and  $\phi_i(\bm{z})$ is either $\theta_i$ or the inequality of the opposite direction, as in the table below.
\begin{center}
\begin{tabular}{c|cccc}
$\theta_i$   & $>$ & $<$ & $\geq$ & $\leq$  \\
$\phi_i$ & $\in\{>, <\}$ & $\in\{>, <\}$ & $\in\{\geq, \leq\}$ & $\in\{\geq, \leq\}$
\end{tabular}
\end{center}
Consider $C_0$ which we write as  
\begin{equation}\label{eq:1}
x\phi_1c_1\land x\phi_2c_2
\end{equation} 
(dropping the $(\bm{z})$ since we are working over a fixed $\bm{z}$). 
If $\phi_1$ and $\phi_2$ are in opposite directions then (\ref{eq:1}) defines a finite interval: either $\I{c_1,c_2}$ or $\I{c_1,c_2}$ (the parentheses notation implies that we do not know whether the interval is open or closed).
If  $\phi_1$ and $\phi_2$ are in the same direction then we have a semi-infinite interval bounded at one end by either $c_1$ or $c_2$.  

So the truth of $C_0$ asserts the existence of an $x$ to satisfy the first two inequalities within an interval $\I{\ell_1,\ell_2}$ where $\ell_1 \in \{c_1, c_2, -\infty$ and $\ell_2 \in \{c_1, c_2, \infty\}$.

We now need to show that there is an $x \in \I{\ell_1,\ell_2}$ which satisfies $P_3 \theta_3 0$ also.  We must have one of the following two cases.
\begin{description}
\item[$\phi_3$ is $>$ or $\geq$:] Then the third inequality requires $x \geq c_3$. So the interval of $x$ to satisfy all inequalities is $\I{max(\ell_1,c_3), \ell_2}$.  If $\ell_2$ is $\infty$ then the interval is clearly non-empty.  Otherwise the interval is bounded at either end by one of $\{c_1, c_2, c_3\}$ and the non-emptiness is implied by the truth of the particular $C_i$ generated by the two inequalities whose indices appear in those bounds.    
\item[$\phi_3$ is $<$ or $\leq$:] Then the third inequality requires $x \leq c_3$ and we need to find $x \in \I{\ell_1, min(\ell_2,c_3)}$ which is concluded non-empty similarly.
\end{description}
This concludes the proof for $M=2$.  With general $M$ the argument is similar:  for a fixed $\bm{z}$ each $P_i \theta_i 0$ defines a semi-infinite interval; the intersection of the first $M$ gives a non-empty interval from the truth of $C_0$; the final inequality defines a last semi-infinite interval which has non-zero intersection by reference to the truth of the appropriate $C_i$.
\end{proof}

The proposition allows us to apply QE techniques once for a formulae with just two linear inequalities, and then recursively derive the formulae for as many linear inequalities as needed.  Each step simply requires a renaming to the original QE output.

For example, consider the conjunction of two strict inequalities:
\[
\exists x \left(0 < a_{11}x+a_{10} \land 0 < a_{21}x+a_{20} \right)
\]
Any QE implementation would derive the equivalent quantifier free formula
\begin{align}
&(a_{21} < 0 \land a_{11} < 0) 
\lor \, (a_{21} < 0 \land 0 < a_{11} \land a_{10}a_{21}-a_{20}a_{11} < 0) \nonumber \\
&\lor \, (a_{21} < 0 \land a_{11} = 0 \land 0 < a_{10}) 
\lor \, (a_{21} = 0 \land a_{11} < 0 \land 0 < a_{20})  \nonumber \\
&\lor \, (a_{21} = 0 \land a_{11} = 0 \land 0 < a_{20} \land 0 < a_{10}) 
\lor \, (0 < a_{21} \land 0 < a_{11}) \nonumber \\
&\lor \, (a_{21} = 0 \land 0 < a_{11} \land 0 < a_{20}) 
\lor \, (0 < a_{21} \land a_{11} = 0 \land 0 < a_{10}) \nonumber \\
&\lor \, (0 < a_{21} \land a_{11} < 0 \land 0 < a_{10}a_{21}-a_{20}a_{11}).
\label{QE2Strict}
\end{align}
Then if we instead consider a problem with a third inequality
\[
\exists x (0 < a_{11}x+a_{10} \land 0 < a_{21}x+a_{20} \land 0< a_{31}x+a_{30}
\]
we can find an equivalent quantifier free formula using a simple relabelling:
\begin{equation}
(\ref{QE2Strict}) + (\ref{QE2Strict})\big|_{a_{21}=a_{31}, a_{20}=a_{30}}
+ (\ref{QE2Strict})\big|_{a_{11}=a_{31}, a_{10}=a_{30}}
\label{QE3Strict}
\end{equation}

\subsection{Processing the output}

The direct evaluation (\ref{QE3Strict}) would produce a formula quite different to the output of QE directly (although the two must be equivalent).  In particular, QE technology usually gives output in disjunctive normal form (DNF), certainly if it is based on cell or case decomposition techniques such as CAD or VS.  
It is likely that to make use of formulae like (\ref{QE3Strict}) a conversion to DNF will be needed.  For example, once in DNF we can apply a simplifier and split into sub-problems for further analysis.  

It is well known that such a conversion can lead to an exponential increase in formula size.  Indeed, the 27 atoms of (\ref{QE3Strict}) become a disjunction of 505 conjunctive clauses under that standard transformation.  
However, via some simple simplification we can conclude the majority of these false (as they contain mutually exclusive constraints on the sign of a variable).  In fact for (\ref{QE3Strict}) we can remove all but 27 clauses this way.  However, the simplification procedure still needed to process the 505.  The 4, 5 and 6 inequality variants have 81, 243 and 729 clauses after simplification so the exponential growth is still apparent - and the final one required tens of thousands of intermediate expressions to be analysed.  Nevertheless, generating the QE output this way is still quicker than doing so directly.

%
%
%

\section{Conclusions and Future Work}

We have demonstrated that there are a wide variety of examples from economics that may be tackled with QE.  They tend to have a larger number of variables than those typical in the QE literature, but also other simplifying characteristics which mean their automatic resolution is possible but would benefit from specialised techniques such as those sketched in Section \ref{SEC:BlockQE}.

\vspace*{0.1in}

There is much scope for future development of the ideas presented here.  Perhaps most useful would be further results like the proposition in Section \ref{SEC:BuildBlock} that allow for the generation of QE building blocks with large numbers of low degree univariate constraints.  These are currently under investigation.

As with many QE techniques there are questions of computation path and algorithms settings that do not call into question the correctness of the output but can have a significant affect on computation time and memory resources.  The best known is the ordering of variables for elimination.  It is well known that such an ordering can be crucial to the tractability of CAD\footnote{Even to the point of changing the complexity from constant to exponential \cite{BD07}} and it will have a similar importance to the approach of Section \ref{SEC:BlockQE}.  Existing heuristics may be applicable \cite{DSS04, BDEW13}, and we note that when considering Algorithm \ref{alg:ByBlocks} it would be permissible to use different orderings for each sub-problem in $\mathcal{C}$ (indeed, this was the approach described for the working example). In addition the order in which those sub-problems are studied and whether to study entire sub-problems in one go are independent areas for optimisation.  Strategies may be developed based on human designed heuristics but also perhaps machine learned techniques as used in \cite{HEWDPB14, HEDP16, KIMA16}.

\vspace*{0.1in}

We note that economics has the potential to be a large application domain for QE technology.  Within education QE tools tailored to economics could be utilised throughout the curriculum to remove much arduous computation and allow greater intuitive experimentation with the ideas.  Within research tools could allow quicker development and testing of ideas with lower risk of error. In the other direction, economics offers examples and inspiration for combined/tailored symbolic algorithms.\footnote{As has recently been the case with biology \cite{BDEEGGHKRSW17, EEGRSW17}.}  Both communities would prosper from further collaboration.
}


\bibliographystyle{plain}
\bibliography{CAD}

\end{document}